\theoremstyle{theorem}
\newtheorem{openquestion}{Open Question}
\theoremstyle{definition}
\newtheorem{construction}{Construction}
\newcommand{\threefield}[3]{$#1\mid#2\mid#3$}
\newcommand{\ProbOne}{\textsc{Interval Scheduling on Eligible Machines}}
\newcommand{\ProbTwo}{\textsc{Interval Scheduling on Unrelated Machines}}
\newcommand{\ProbThree}{\textsc{Unweighted Interval Scheduling on Unrelated Machines}}
\newcommand{\ProbFour}{\textsc{Unweighted Interval Scheduling on Eligible Machines}}
\newcommand{\appref}[1]{{\hyperref[proof:#1]{\appsymb}}}
\newcommand{\appLink}[1]{{\hyperref[#1]{\appsymb}}}
\title{Hardness of Interval Scheduling on Unrelated Machines} 
\titlerunning{Hardness of Interval Scheduling on Unrelated Machines} 
\author{Danny~Hermelin}{Department of Industrial Engineering and Management, Ben-Gurion~University~of~the~Negev, 
Beer-Sheva, 
Israel}{hermelin@bgu.ac.il}{}{}
\author{Yuval~Itzhaki}{Department of Industrial Engineering and Management, Ben-Gurion~University~of~the~Negev, 
Beer-Sheva, 
Israel}{ityuval@post.bgu.ac.il}{}{}
\author{Hendrik~Molter}{Department of Industrial Engineering and Management, Ben-Gurion~University~of~the~Negev, 
Beer-Sheva, 
Israel}{molterh@post.bgu.ac.il}{}{}
\author{Dvir~Shabtay}{Department of Industrial Engineering and Management, Ben-Gurion~University~of~the~Negev, 
Beer-Sheva, 
Israel}{dvirs@bgu.ac.il}{}{}
\authorrunning{D.\ Hermelin, Y.\ Itzhaki, H.\ Molter, and D.\ Shabtay} 
\keywords{Just-in-time scheduling, Parallel machines, Eligible machine sets, W[1]-hardness, NP-hardness} 
\begin{document}

\maketitle

\begin{abstract}
We provide new (parameterized) computational hardness results for \textsc{Interval Scheduling on Unrelated Machines}. It is a classical scheduling problem motivated from \emph{just-in-time} or \emph{lean} manufacturing, where the goal is to complete jobs exactly at their deadline. We are given $n$ jobs and $m$ machines. Each job has a deadline, a weight, and a processing time that may be different on each machine. The goal is find a schedule that maximizes the total weight of jobs completed exactly at their deadline. Note that this uniquely defines a processing time interval for each job on each machine.

\textsc{Interval Scheduling on Unrelated Machines} is closely related to coloring interval graphs and has been thoroughly studied for several decades. However, as pointed out by Mnich and van Bevern [Computers \& Operations Research, 2018], the parameterized complexity for the number $m$ of machines as a parameter remained open. We resolve this by showing that \textsc{Interval Scheduling on Unrelated Machines} is W[1]-hard when parameterized by the number $m$ of machines. To this end, we prove W[1]-hardness with respect to $m$ of the special case where we have parallel machines with eligible machine sets for jobs. This answers Open Problem 8 of Mnich and van Bevern's list of 15 open problems in the parameterized complexity of scheduling [Computers \& Operations Research,~2018].

Furthermore, we resolve the computational complexity status of the unweighted version of \textsc{Interval Scheduling on Unrelated Machines} by proving that it is NP-complete. This answers an open question by Sung and Vlach [Journal of Scheduling, 2005].
\end{abstract}


\section{Introduction}
In scheduling problems, we wish to assign jobs to machines in order to maximize a certain optimization objective while respecting certain constraints. In many traditional scheduling settings, jobs can be scheduled to start at any point in time and then need a given \emph{processing time} to be completed.
However, in a typical \emph{interval scheduling problem}, each job can be processed only in a fixed time interval, or sometimes in a set of time intervals, that may vary from machine to machine~\cite{kolen2007interval,kovalyov2007fixed}. Many different variations of interval scheduling have been considered and investigated~\cite{angelelli2014optimal,arkin1987scheduling,bentert2019inductive,van2015interval,van2017parameterized,bouzina1996interval,spieksma1999approximability}. 

In interval scheduling in its most basic form, we are given a set of $n$ \emph{jobs} and a set of $m$ identical parallel \emph{machines} that each can process one job at a time. Each jobs has a \emph{processing time}, a \emph{deadline}, and a \emph{weight}, and shall be processed such that it finishes exactly at its deadline. This uniquely defines an interval for each job in which it can be processed. A \emph{schedule} assigns a subset of the jobs to machines. Unassigned jobs are rejected. We call a schedule \emph{feasible} if no two jobs with overlapping processing time intervals are assigned to the same machine. The goal is to find a feasible schedule that maximizes the weighted number of scheduled jobs.\footnote{In the standard three field notation for scheduling problems of Graham~\cite{Graham1969} this problem is sometimes denoted by \threefield{P}{p_j=d_j-r_j}{\sum_j w_jU_j}, or \threefield{P}{}{\sum_j w_jE_j}, or \threefield{P}{}{\text{JIT}}. We give a more formal definition in \cref{sec:0}.} 

This setting corresponds to the concept of \emph{just-in-time (JIT)} or \emph{lean manufacturing} that revolutionized industrial production processes in the 1980s and 1990s~\cite{krafcik1988triumph,ohno1988toyota,shingo1985revolution,womack1997lean,womack1990machine}. Herein, the main goal is to provide and receive goods precisely when they are needed in order to reduce storage costs and wastage. The first implementation of this manufacturing paradigm is attributed to the Japanese automobile company Toyota and is sometimes also called Toyota Production System (TPS)~\cite{ohno1988toyota,shingo1985revolution}. Naturally, just-in-time and interval scheduling in many different variants has received much attention from the research community since the late 1980s until today~\cite{angelelli2014optimal,baker1990sequencing,bentert2019inductive,van2015interval,van2017parameterized,bouzina1996interval,vcepek2005quadratic,hiraishi2002scheduling,lann1996single,leyvand2010just,shabtay2012just,spieksma1999approximability,sung2005maximizing}.

The basic form of interval scheduling as described above is known to be solvable in polynomial time~\cite{arkin1987scheduling,bouzina1996interval,carlisle1995k,vcepek2005quadratic,hiraishi2002scheduling}. It is closely related to the classical problems of finding maximum independent sets in interval graphs and coloring interval graphs~\cite{carlisle1995k,gavril1974intersection,RoseTL76,yannakakis1987maximum}. The jobs of an interval scheduling instance naturally define an interval graph with vertex weights. For example, if there is only one machine, then interval scheduling is equivalent to finding a maximum weight independent set in an interval graph. Coloring an interval graph or, more specifically, computing its chromatic number is equivalent to determining the minimum number of machines necessary to schedule all jobs.

In our work, we investigate several natural generalizations and variants of interval scheduling and answer some longstanding open questions about their (parameterized) computational complexity.

The first problem we consider in this paper is \ProbOne, a natural generalization of the basic interval scheduling problem we introduced earlier. Here, each job additionally has a set of eligible machines and each job can only be assigned to a machine in this set in a feasible schedule. Arkin and Silverberg~\cite{arkin1987scheduling} proved in 1987 that \ProbOne\ is strongly NP-hard and can be solved in $O(mn^{m+1})$ time. In terms of parameterized complexity, Arkin and Silverberg~\cite{arkin1987scheduling} showed that \ProbOne\ is in XP when parameterized by the number $m$ of machines. However, they left open whether \ProbOne\ also admits an FPT-algorithm for parameter $m$. 
Mnich and van Bevern~\cite{mnich2018parameterized} included this question as Open Problem~8 in their 2018 list of 15 open problems in the parameterized complexity of scheduling. We answer this question negatively in our first main contribution of this paper.

\begin{theorem}\label{thm:w1hard}
\ProbOne\ is strongly W[1]-hard when parameterized by the number $m$ of machines.
\end{theorem}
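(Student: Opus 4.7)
The plan is to prove \cref{thm:w1hard} by a parameterized reduction from \textsc{Multicolored Clique}, a canonical W[1]-hard problem parameterized by the number $k$ of color classes. Given a graph $G=(V,E)$ whose vertex set is partitioned into color classes $V_1,\ldots,V_k$ with $|V_i|=N$, the goal is to construct a \ProbOne\ instance on $m=f(k)$ machines (I aim for $m\in\Theta(k^2)$) together with a target weight $W^\star$ such that the scheduling optimum reaches $W^\star$ if and only if $G$ admits a multicolored $k$-clique. By keeping all weights, processing times, and deadlines bounded polynomially in $n=kN$, the reduction will in addition establish the strong form of the hardness.

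The construction would use one \emph{vertex machine} $M_i$ per color class together with one \emph{pair machine} $M_{i,j}$ per pair of color classes, for a total of $m=k+\binom{k}{2}$ machines. On $M_i$ I would place $N$ high-weight \emph{selector jobs}, one per vertex of $V_i$, all sharing a common time point so that at most one of them can be scheduled per class; the weight budget will be set so that exactly one selector per class must in fact be scheduled in any solution of weight at least $W^\star$. On each pair machine $M_{i,j}$ I would place one unit-weight \emph{edge job} for every edge of $G$ with one endpoint in $V_i$ and the other in $V_j$, again with overlapping intervals so that only one edge job per pair machine can be scheduled. The coupling between the two layers is the heart of the construction: each edge job is additionally eligible on its two incident vertex machines, and the fixed-interval restriction of the parallel-machine model is dealt with by choosing a \emph{global} time coordinate in which the selector slots of all $k$ classes live side by side, so that an edge job's single interval simultaneously lands exactly in the two selector slots corresponding to its endpoints. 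Edge jobs whose endpoints disagree with the chosen selectors would then be blocked from being offloaded to a vertex machine, while edge jobs whose endpoints match are fine.

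Once the construction is in place, the forward direction will be immediate: given a multicolored clique $\{v_{1,a_1},\ldots,v_{k,a_k}\}$, schedule the $k$ selectors of these vertices on the vertex machines and, on every pair machine $M_{i,j}$, the edge job for $\{v_{i,a_i},v_{j,a_j}\}$. The converse would follow from a weight-accounting argument. First, the target $W^\star=kH+\binom{k}{2}$ (with $H$ the selector weight, chosen dominant) forces any optimum schedule to include exactly $k$ selectors---one per class---and exactly $\binom{k}{2}$ edge jobs---one per pair machine. Second, the global-time coupling gadget forces each scheduled edge job's endpoints in $V_i$ and $V_j$ to coincide with the chosen selectors of those classes, so the $\binom{k}{2}$ scheduled edges are precisely the $\binom{k}{2}$ edges spanning the $k$ selected vertices, which means those vertices form a multicolored clique.

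The main obstacle is the design of the coupling gadget within the rigidity of the parallel-machine model: because a single job has an identical time interval on every machine on which it is eligible, one cannot naively ``split'' an edge job between different slots on different vertex machines. The way I would address this is to lay out all vertex-selector slots globally on a common time axis---say by giving class $i$ the time window $[(i-1)N,iN]$---and then, for each edge $e=\{v_{i,a},v_{j,b}\}$, represent $e$ by a small number of auxiliary jobs with distinct but individually fixed intervals that together certify both endpoint commitments. Showing that this weight-and-interval arithmetic closes up tightly---neither allowing a non-clique schedule to sneak past $W^\star$ nor preventing a genuine clique schedule from attaining it---is the part of the proof that will require the most delicate design and verification.
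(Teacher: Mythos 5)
Your overall skeleton matches the paper's: both reduce from \textsc{Multicolored Clique} parameterized by $k$, both use one machine per color pair plus a constant-factor overhead (the paper uses $\binom{k}{2}+1$ machines, with a single \emph{validation machine} playing the role of all $k$ of your vertex machines), and both close the argument with a weight-threshold accounting that forces exactly one edge job per pair machine and one selected vertex per class. However, there is a genuine gap at the heart of your proposal: the coupling gadget is not actually constructed, and the version you sketch does not work as stated. You want an edge job's \emph{single} fixed interval to ``simultaneously land exactly in the two selector slots corresponding to its endpoints,'' but those two slots are disjoint sub-intervals of your global time axis (one inside $[(i-1)N,iN]$, one inside $[(j-1)N,jN]$), and one interval cannot coincide with two disjoint ones. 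You recognize this and defer to ``a small number of auxiliary jobs with distinct but individually fixed intervals,'' but designing those auxiliary jobs so that the arithmetic ``closes up tightly'' is precisely the entire difficulty of the reduction; everything else (one selector per class, one edge per pair machine, dominant selector weights) is routine. Moreover, the direction of the blocking logic is unresolved: if an edge job's interval sits inside its endpoint's selector slot, then scheduling that selector \emph{conflicts} with edges incident to the selected vertex, which is the opposite of what you need.

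For comparison, the paper resolves this with a displacement-and-tiling gadget. Each vertex $v$ of color $\ell$ gets $k$ jobs: one long job $j_v^{(\ell)}$ whose interval covers the $k-1$ short, high-weight jobs $j_v^{(\ell')}$. Selecting $v$ means scheduling $j_v^{(\ell)}$ on the validation machine, which evicts the $k-1$ short jobs; their weight is too large to forfeit, and the only other machine on which $j_v^{(\ell')}$ is eligible is the edge selection machine for the pair $\ell,\ell'$. On that machine, an edge job for $e=\{v,w\}$ together with two weight-calibrated ``color combination'' jobs (a prefix job ending just before $v$'s slot and a suffix job starting just after $w$'s slot) tiles the whole timeline except for exactly the two slots of $j_v^{(\ell')}$ and $j_w^{(\ell)}$. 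Thus the evicted jobs can be rescued if and only if the scheduled edge is incident to both selected vertices, which is what forces the selected vertices to form a clique. Some such indirection---certifying vertex selection by which jobs are \emph{displaced} rather than by direct overlap with the edge jobs---appears to be necessary to get around the rigidity you correctly identify, and without it your reduction is incomplete.
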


A natural and well-studied generalization of \ProbOne\ is \ProbTwo. 
In the latter, the processing time of each job can be machine-dependent whereas the deadline stays the same on all machines.
Furthermore, each jobs is eligible on all machines. This definition stems from the just-in-time motivation, where each job should be finished exactly at its deadline but on different machines it may take different times to complete the job. We mention in passing that if both processing times and deadlines can be machine-dependent, the problem becomes NP-hard on two machines~\cite{kovalyov2007fixed,spieksma1999approximability}.
Sung and Vlach~\cite{sung2005maximizing} showed that \ProbTwo\ can also be solved in $O(mn^{m+1})$ time, generalizing the result of Arkin and Silverberg~\cite{arkin1987scheduling}. 
Mnich and van Bevern~\cite{mnich2018parameterized} asked in Open Problem~8 for an FPT-algorithm for \ProbOne\ parameterized by the number $m$ of machines as a first step towards finding an FPT-algorithm for \ProbTwo\ parameterized by $m$.
However, \cref{thm:w1hard} naturally implies that \ProbTwo\ presumably also does not admit an FPT-algorithm for the number $m$ of machines as a parameter.

\begin{corollary}\label{thm:w1hard2}
\ProbTwo\ is strongly W[1]-hard when parameterized by the number $m$ of machines.
\end{corollary}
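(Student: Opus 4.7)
The plan is to derive \cref{thm:w1hard2} as an immediate consequence of \cref{thm:w1hard} by a straightforward parameter-preserving polynomial-time reduction from \ProbOne\ to \ProbTwo. Concretely, since in \ProbTwo\ every job is eligible on every machine, the task is to encode the eligibility sets of a \ProbOne\ instance into the machine-dependent processing times of a \ProbTwo\ instance in such a way that any attempt to place a job on an ineligible machine is automatically rejected by the feasibility constraints.

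First I would describe the reduction. Given a \ProbOne\ instance with $m$ machines and jobs $j$ having weight $w_j$, deadline $d_j$, common processing time $p_j$, and eligibility set $M_j$, I would construct a \ProbTwo\ instance on the same $m$ machines and the same set of jobs, keeping the weights and deadlines unchanged, and defining
\[
p_{i,j} \;=\; \begin{cases} p_j & \text{if } i \in M_j,\\ d_j+1 & \text{if } i \notin M_j.\end{cases}
\]
Because the processing interval of $j$ on machine $i$ is $[d_j-p_{i,j},\,d_j]$, the second case forces this interval to begin strictly before time $0$ and hence to be infeasible under the standard convention that no job starts before time $0$.

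Next I would verify equivalence and preservation of parameters. In any feasible schedule of the constructed \ProbTwo\ instance, every scheduled job $j$ must be placed on a machine in $M_j$, where its interval coincides with the one in the original \ProbOne\ instance. Conversely, any feasible \ProbOne\ schedule is feasible in the \ProbTwo\ instance. Thus the optimum values coincide. The number of machines is unchanged, so the parameter $m$ is preserved; furthermore all newly introduced processing times $d_j+1$ are bounded by a polynomial in the size of the original instance, so the reduction preserves strong W[1]-hardness from \cref{thm:w1hard}.

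The only conceptual subtlety, rather than a genuine obstacle, is whether the formal model excludes negative start times (which is the standard interpretation in this literature and is consistent with the framing in the introduction). If one wished to avoid even this mild assumption, essentially the same construction works after shifting all deadlines by a common offset and inserting, for each machine $i$, a single mandatory high-weight blocker job that occupies the prefix of that machine's timeline, forcing the same effect. Either way, \cref{thm:w1hard2} follows immediately from \cref{thm:w1hard}.
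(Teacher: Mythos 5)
Your proposal is correct and matches the paper's (implicit) argument: the paper derives \cref{thm:w1hard2} from \cref{thm:w1hard} by treating \ProbTwo\ as a generalization of \ProbOne, which is exactly your encoding of ineligibility via processing times $p_{i,j}=d_j+1$ that push the interval before time~$0$. The new processing times are polynomially bounded, so strong W[1]-hardness is preserved, and your noted caveat about non-negative start times is consistent with the paper's conventions.
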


We point out that all known hardness reductions for \ProbTwo\ require job weights, raising the question whether the weights play an integral role in the computational complexity of the problem. 
\ProbThree\ is the natural special case of \ProbTwo\ where all jobs have weight one.
Sung and Vlach~\cite{sung2005maximizing} asked in 2005 to resolve the computational complexity status of \ProbThree. We give an answer to this in our second main contribution.

\begin{theorem}\label{thm:nph}
\ProbThree\ is NP-complete.
\end{theorem}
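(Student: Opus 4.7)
The plan is to give a polynomial-time reduction from \textsc{3-SAT} to \ProbThree{} and separately observe membership in NP. Given a 3-CNF formula with variables $x_1,\dots,x_n$ and clauses $C_1,\dots,C_m$, I would build an instance with $2n$ machines and $n+m$ (unit-weight) jobs, and show that all $n+m$ jobs can be simultaneously scheduled if and only if the formula is satisfiable. NP-membership is immediate: given a proposed schedule, verify in polynomial time that on every machine the assigned jobs have pairwise disjoint processing intervals $[d_j-p_{ij},d_j]$.

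For each variable $x_i$ I would introduce two \emph{literal machines} $M_i^+,M_i^-$ and one \emph{variable job} $V_i$ of deadline $L:=m+1$. On $M_i^+$ and $M_i^-$ the processing time of $V_i$ is set to $L$, so that $V_i$ occupies the whole timeline $[0,L]$ of whichever of these two machines it is placed on; on every other machine the processing time is $L+1$, forcing a negative start time and hence infeasibility. Thus $V_i$ must be placed on exactly one of $\{M_i^+,M_i^-\}$, and doing so completely blocks that machine. The intended encoding is $V_i$ on $M_i^-$ corresponding to $x_i=T$ (leaving $M_i^+$ free) and $V_i$ on $M_i^+$ corresponding to $x_i=F$ (leaving $M_i^-$ free), so that the literal $x_i$ (respectively $\neg x_i$) is satisfied precisely when $M_i^+$ (respectively $M_i^-$) is free.

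For each clause $C_j$ I would introduce a \emph{clause job} $K_j$ with deadline $j$, processing time $1$ on each of the three literal machines corresponding to its literals, and processing time $L+1$ on every remaining machine. Then $K_j$ can only be placed on one of its three literal machines, and it actually fits there only if that machine is not blocked by the associated variable job, i.e.\ only if the corresponding literal of $C_j$ is true under the encoding above. Because the clause-job deadlines $1,\dots,m$ are distinct integers and each $K_j$ has unit processing time, any collection of clause jobs sharing a free literal machine occupies pairwise disjoint unit intervals all contained in $[0,m]\subseteq[0,L]$, so they never conflict with one another.

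Correctness will then follow in two matching directions. Given a satisfying assignment, place each $V_i$ according to the encoding above, then for each clause pick a true literal and place $K_j$ on the free literal machine that it corresponds to; this schedules all $n+m$ jobs. Conversely, a schedule of all $n+m$ jobs must place every $V_i$ on one of its two literal machines (its only feasible locations), inducing a total truth assignment, and each $K_j$ must lie on a literal machine that is not blocked by the variable job of the corresponding variable, which forces the corresponding literal, and hence $C_j$, to be satisfied. The main subtlety I expect is keeping the polarity convention consistent between ``$V_i$ blocks machine $M$'' and ``the literal represented by $M$ is false'' while simultaneously allowing many clause jobs to coexist on a shared free machine; both concerns are handled uniformly by the distinct-deadline, unit-processing-time choice for the $K_j$'s combined with the oversized processing times on all machines outside a job's gadget.
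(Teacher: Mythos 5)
Your reduction has the same combinatorial skeleton as the paper's (variable jobs commit to a polarity by fully blocking one of two dedicated machines; clause jobs must land on an unblocked literal machine; distinct unit-length deadlines let several clause jobs share one machine), but it is implemented more directly: you reduce from plain \textsc{3-SAT} with only $2n$ machines and $n+m$ jobs, whereas the paper reduces from \textsc{Exact (3,4)-SAT} and needs separate variable-selection, clause-selection, and validation machines plus $2\alpha+2\beta$ dummy jobs, because it insists on scheduling \emph{both} variable jobs $x^T,x^F$ and all three clause jobs of every clause. Your version, if sound, would be a genuinely cleaner proof.

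The gap is in your blocking mechanism. You forbid a job $j$ from being placed on machine $i$ by setting $p_{i,j}=L+1>d_j$ and declaring the placement infeasible because the start time $d_j-p_{i,j}$ is negative. The paper's problem definition never states that processing intervals must lie in $[0,\infty)$: feasibility is defined \emph{only} by pairwise disjointness of the intervals $(d_j-p_{i,j},d_j]$ of jobs assigned to the same machine. Under that literal reading your backward direction collapses: a variable job $V_i$ may be parked on some other variable's machine, and a clause job $K_j$ may be scheduled as a ``fat'' interval $(j-L-1,j]$ on any machine not carrying a variable job, so one can schedule all $n+m$ jobs of an unsatisfiable formula (e.g.\ pad the formula with variables that occur in no clause, so that enough literal machines are free to absorb every clause job without reference to the truth assignment). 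This is exactly why the paper never uses processing times exceeding deadlines; instead it blocks placements with \emph{dummy jobs}: one dummy per machine, all with tiny pairwise-overlapping intervals anchored at time $0$, so that every machine must carry exactly one dummy, and a job is ``disabled'' on a machine by giving it processing time equal to its deadline (interval $(0,d_j]$), which then collides with whichever dummy sits there. Your construction is repairable in precisely this way --- prepend $2n$ dummy jobs with deadlines $1,\dots,2n$, shift all other deadlines above $2n$, and replace every $p_{i,j}=L+1$ by $p_{i,j}=d_j$ --- but as written, the step ``$V_i$ must be placed on $M_i^+$ or $M_i^-$ and $K_j$ must be placed on a literal machine'' is unjustified, and it is the load-bearing step of the converse direction.
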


We remark that our reduction for \cref{thm:nph} does not imply hardness for the unweighted version of \ProbOne. We leave this open for future research. An additional immediate question that we leave open for future research is whether \ProbThree\ admits an FPT-algorithm for the number $m$ of machines as a parameter.

With \cref{thm:w1hard}, \cref{thm:w1hard2}, and \cref{thm:nph} we answer fundamental longstanding open questions concerning the (parameterized) computational complexity of natural interval scheduling problems. For \ProbOne\ and \ProbTwo, our results together with the XP-containment results from Arkin and Silverberg~\cite{arkin1987scheduling} and Sung and Vlach~\cite{sung2005maximizing}, respectively, essentially resolve their parameterized complexity classification for the number $m$ of machines as a parameter.
We point out that all considered problem variants are known to be fixed-parameter tractable when parameterized by the number $n$ of jobs. This can be shown with a simple reduction to \textsc{Multicolored Independent Set on Interval Graphs} parameterized by the number of colors, which is known to be fixed-parameter tractable~\cite{bentert2019inductive,van2015interval}. 
Hence, we make an important further step towards fully understanding the parameterized complexity
of several basic and natural interval scheduling problems.
We remark that our results also imply that \textsc{Multicolored Independent Set on Interval Graphs} is W[1]-hard when parameterized by the maximum number of vertices of any color.

The rest of paper is organized as follows: we give formal definitions of all problems in \cref{sec:0}. We prove \cref{thm:w1hard} and \cref{thm:w1hard2} in \cref{sec:1} and we prove \cref{thm:nph} inn \cref{sec:2}. 
We conclude with future research directions in \cref{sec:3}.

\section{Problem Setting}\label{sec:0}

The first problem we consider is \ProbOne.
Here, we have a set of $n$ jobs $\{j_1, j_2,\ldots, j_n\}$ and a set of $m$ machines $\{i_1, i_2, \ldots, i_m\}$ that each can process one job at a time.
Each job $j$ has a \emph{processing time} $p_j$, a \emph{deadline} $d_j$, a \emph{weight} $w_j$, and a set of \emph{eligible machines} $M_j\subseteq \{i_1, i_2, \ldots, i_m\}$. Job $j$ can be processed in exactly \emph{one} fixed time interval $(d_j-p_j,d_j]$, specified by its processing time and deadline, that is the same on each of its eligible machines. 
A schedule is a mapping from jobs to machines. More formally, a \emph{schedule} is a function $\sigma : \{j_1, j_2, \ldots, j_n\}\rightarrow \{i_1, i_2, \ldots, i_m, \bot\}$. If for job $j$ we have $\sigma(j)=i$ (with $i\neq\bot$), then job $j$ is scheduled to be processed on machine $i$. If for job $j$ we have $\sigma(j)=\bot$, then job $j$ is not scheduled, that is, it is not assigned to any machine.
We say that two jobs $j,j'$ are \emph{in conflict} on a machine $i$ if $(d_j-p_j,d_j]\cap(d_{j'}-p_{j'},d_{j'}]\neq\emptyset$, that is, the processing time intervals corresponding to jobs $j$ and $j'$ on machine $i$ overlap.
A schedule $\sigma$ is \emph{feasible} if there is no pair of jobs $j,j'$ with $\sigma(j)=\sigma(j')=i\neq \bot$ that is in conflict on machine $i$ and each job is mapped to one of its eligible machines. 
The goal is to find a feasible schedule that maximizes the weighted number of scheduled jobs $W=\sum_{j\mid \sigma(j)\neq \bot}w_j$.
In the standard three field notation for scheduling problems of Graham~\cite{Graham1969} \ProbOne\ is sometimes denoted by \threefield{P}{M_j,p_j=d_j-r_j}{\sum_j w_jU_j}, or \threefield{P}{M_j}{\sum_j w_jE_j}, or \threefield{P}{M_j}{\text{JIT}}.

The second problem we consider is \ProbTwo.
Here, for each job $j$ the processing time $p_{i,j}$ can depend on machine $i$ whereas the deadline $d_j$ is the same on all machines. 
Hence, the processing time interval of job $j$ on machine $i$ is $(d_j-p_{i,j},d_j]$.
Moreover, the all jobs are eligible on all machines, that is, $M_j=\{i_1, i_2, \ldots, i_m\}$ for all jobs $j$.
In the standard three field notation for scheduling problems of Graham~\cite{Graham1969} \ProbTwo\ is sometimes denoted by \threefield{R}{p_j=d_j-r_j}{\sum_j w_jU_j}, or \threefield{R}{}{\sum_j w_jE_j}, or \threefield{R}{}{\text{JIT}}.

Finally, the third problem we consider is \ProbThree, the unweighted version of \ProbTwo. Here, we have that $w_j=1$ for all jobs $j$. In the standard three field notation for scheduling problems of Graham~\cite{Graham1969} \ProbThree\ is sometimes denoted by \threefield{R}{p_j=d_j-r_j}{\sum_j U_j}, or \threefield{R}{}{\sum_j E_j}, or \threefield{R}{w_j=1}{\text{JIT}}.

\section{W[1]-Hardness of \ProbOne}\label{sec:1}

In this section, we prove \cref{thm:w1hard} from which \cref{thm:w1hard2} follows directly.
To prove \cref{thm:w1hard}, we present a parameterized polynomial-time reduction from \textsc{Multicolored Clique} parameterized by the number of colors to \ProbOne\ parameterized by the number $m$ of machines. In \textsc{Multicolored Clique}, we are given a $k$-partite graph $G=(V_1\uplus V_2\uplus\ldots\uplus V_k,E)$, we are asked whether $G$ contains a clique of size $k$. The $k$ vertex parts $V_1, V_2, \ldots, V_k$ are called \emph{colors}. \textsc{Multicolored Clique} parameterized by $k$ is known to be W[1]-hard~\cite{fellows2009multipleinterval}.

Given an instance of \textsc{Multicolored Clique}, we construct an instance of \ProbOne\ as follows. 

\begin{construction}\label{const:w1hard}
Let $G=(V_1\uplus V_2\uplus\ldots\uplus V_k,E)$ be a $k$-partite graph with $n_G$ vertices.
Assume we have some total ordering $<_\pi$ over $V:=V_1\uplus V_2\uplus\ldots\uplus V_k$ such that for all $v\in V_\ell$ and $w\in V_{\ell'}$ we have that if $\ell<\ell'$ then $v<_\pi w$. Let $\pi(v)$ denote the ordinal position of $v\in V$ in the ordering $<_\pi$.

In the following, we describe the jobs and specify their processing times, deadlines and weights. Then we describe the machines and the eligible machine sets for the jobs. 
In order to describe the weights more easily, we introduce the following three values: $c_1=n_G+1$, $c_2=(k-1)n_Gc_1+n_G+1$, and $c_3=(kn_G+k^2n_G)n_Gc_2+1$. 
We create the following jobs:
\begin{itemize}
    \item For each vertex $v\in V$, we create $k$ \emph{vertex jobs} $j_v^{(1)}, j_v^{(2)},\ldots, j_v^{(k)}$, where one of the vertex jobs corresponds to the color of $v$ and the $k-1$ other vertex jobs correspond to the other $k-1$ colors. 
    
    Let $v\in V_\ell$. The processing time of $j_v^{(\ell)}$ (the vertex job corresponding to the same color as $v$) is $k+2$, the deadline of $j_v^{(\ell)}$ is $(k+2)\pi(v)+1$, and the weight of $j_v^{(\ell)}$ is one. 
    
    The processing time of $j_v^{(\ell')}$ with $\ell'\neq \ell$ (vertex jobs corresponding to a different color then $v$) is one, the deadline of $j_v^{(\ell')}$ with $\ell'\neq \ell$ is $(k+2)\pi(v)-\ell'$, and the weight of $j_v^{(\ell')}$ with $\ell'\neq \ell$ is $c_1$. 
    \item For each edge $e=\{v,w\}\in E$ with $v\in V_\ell$, $w\in V_{\ell'}$, and $\ell<\ell'$, we create one \emph{edge job} $j_e$ with processing time $(k+2)(\pi(w)-\pi(v))-\ell+\ell'$, deadline $(k+2)\pi(w)-\ell-1$, and weight $c_2(\pi(w)-\pi(v))+c_3$.
    \item For each color combination $\ell,\ell'$ with $\ell<\ell'$ we create $|V_\ell|+|V_{\ell'}|$ \emph{color combination jobs}, one for each $v\in V_\ell$ and one for each $w\in V_{\ell'}$. 
    
    Let $v\in V_{\ell}$, we create a job $j_v^{(\ell,\ell')}$ with processing time $(k+2)\pi(v)-\ell'-2$, deadline $(k+2)\pi(v)-\ell'-1$, and weight $c_2\pi(v)$. 
    
    Let $w\in V_{\ell'}$, we create a job $j_w^{(\ell,\ell')}$ with processing time $(k+2)(n_G-\pi(w))+\ell+2$, deadline $(k+2)n_G+2$, and weight $c_2(n_G-\pi(w))$.
\end{itemize}

We create $m=\binom{k}{2}+1$ machines $i_1,i_2,\ldots, i_{\binom{k}{2}+1}$. We call the first $\binom{k}{2}$ machines \emph{edge selection machines} (one machine for each color combination) and we call the remaining machine \emph{validation machine}.

\begin{figure}[!t]
    \centering
    \begin{tikzpicture}[scale=0.75,yscale=1.5]
        \input{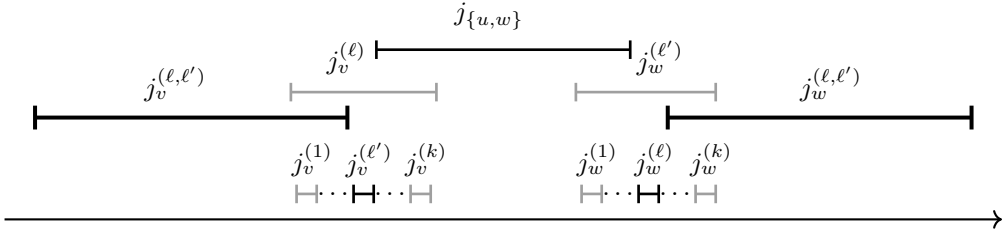}
    \end{tikzpicture}
    \caption{Illustration of the edge selection machine for color combination $\ell,\ell'$ with $\ell<\ell'$. Depicted are intervals of jobs relating to $v\in V_\ell$, $w\in V_{\ell'}$, and $e=\{v,w\}\in E$. Gray intervals correspond to jobs that are not eligible on the machine.}
    \label{fig:w1_edge_selection}
\end{figure}

Consider color combination $\ell,\ell'$ with $\ell<\ell'$ and let $i$ be the corresponding edge selection machine.
\begin{itemize}
    \item For each vertex $v\in V_\ell$ we add machine $i$ to the set of eligible machines of job $j_v^{(\ell')}$ and of job $j_v^{(\ell,\ell')}$.
    \item For each vertex $w\in V_{\ell'}$ we add machine $i$ to the set of eligible machines of job $j_w^{(\ell)}$ and of job $j_w^{(\ell,\ell')}$.
    \item For each edge $e=\{v,w\}\in E$ with $v\in V_\ell$ and $w\in V_{\ell'}$ we add machine $i$ to the set of eligible machines of job $j_e$.
\end{itemize}
We give an illustration of the edge selection machines in \cref{fig:w1_edge_selection}.
Finally, consider the validation machine $i_{\binom{k}{2}+1}$. We add the validation machine to the set of eligible machines of all vertex jobs.
We give an illustration of the validation machine in \cref{fig:w1_validation_machine}.

\begin{figure}[!t]
    \centering
    \begin{tikzpicture}[scale=0.85,yscale=1.1]
        \input{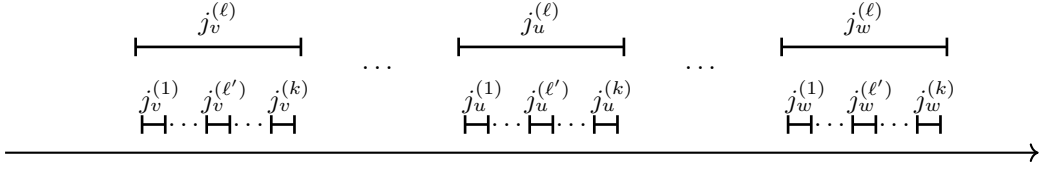}
    \end{tikzpicture}
    \caption{Illustration of the validation machine. Depicted are intervals of jobs corresponding to vertices $v,u,w\in V_\ell$ with $v<_\pi u<_\pi w$.}
    \label{fig:w1_validation_machine}
\end{figure}
\end{construction}

This finishes our construction of the \ProbOne\ instance. We first show that given a clique of size $k$ in $G$, we can create a feasible schedule for the constructed instance such that the total weight of scheduled jobs attains at least a certain value.

\begin{lemma}\label{lem:corr1}
Let $G$ be an instance of \textsc{Multicolored Clique}. Let $I$ be the \ProbOne\ instance computed from $G$ as specified by \cref{const:w1hard}. If $G$ contains a clique of size $k$, then there is a feasible schedule $\sigma$ for $I$ such that for the total weight $W$ of scheduled jobs we have
\[
W\ge\binom{k}{2}c_3 + \binom{k}{2}n_G c_2 + (k-1)n_G c_1 + k.
\]
\end{lemma}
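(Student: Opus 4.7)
My plan is to build an explicit feasible schedule $\sigma$ from the given $k$-clique $\{v_1,\ldots,v_k\}$ (with $v_\ell\in V_\ell$) and then verify its total weight by a direct per-machine computation.

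First, I would define the schedule. For each color pair $(\ell,\ell')$ with $\ell<\ell'$, on the corresponding edge selection machine I would place the following four jobs, abutting in left-to-right order along the time axis: the color combination job $j_{v_\ell}^{(\ell,\ell')}$, the edge job $j_{\{v_\ell,v_{\ell'}\}}$ (which exists because the clique contains this edge), the unit vertex job $j_{v_{\ell'}}^{(\ell)}$, and the color combination job $j_{v_{\ell'}}^{(\ell,\ell')}$. On the validation machine I would schedule, for every non-clique vertex $v$ of color $\ell$, all $k-1$ unit vertex jobs $\{j_v^{(\ell')}:\ell'\neq\ell\}$, and for every clique vertex $v_\ell$ the long vertex job $j_{v_\ell}^{(\ell)}$.

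Second, I would verify feasibility by inspecting the processing time intervals from \cref{const:w1hard}. On an edge selection machine, writing $P=\pi(v_\ell)$ and $Q=\pi(v_{\ell'})$, a direct substitution shows that the four chosen intervals are $(1,(k+2)P-\ell'-1]$, $((k+2)P-\ell'-1,(k+2)Q-\ell-1]$, $((k+2)Q-\ell-1,(k+2)Q-\ell]$, and $((k+2)Q-\ell,(k+2)n_G+2]$, respectively; they share only endpoints and the half-open convention prevents conflicts. On the validation machine, the long vertex jobs of distinct vertices live in disjoint length-$(k+2)$ slots determined by their $\pi$-values, and within each non-clique slot the $k-1$ unit jobs have pairwise distinct deadlines, so their unit intervals are pairwise disjoint.

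Third, I would add up the weights. On each edge selection machine the three $c_2$ coefficients telescope, $c_2 P+c_2(Q-P)+c_2(n_G-Q)=n_G c_2$, giving a per-machine contribution of $c_3+n_G c_2+c_1$; summed over all $\binom{k}{2}$ edge selection machines this yields $\binom{k}{2}c_3+\binom{k}{2}n_G c_2+\binom{k}{2}c_1$. The validation machine contributes $(n_G-k)(k-1)c_1$ from the non-clique unit jobs and $k$ from the clique self-jobs. Combining the $c_1$ coefficients completes the verification.

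The main obstacle I expect is the careful endpoint bookkeeping on each edge selection machine, making sure the half-open convention is applied consistently so that the four chosen intervals really do pack into the time axis without conflict. Once that is in hand, the telescoping identity for the $c_2$ contributions and the straightforward arithmetic for the $c_1$ contributions yield the stated lower bound.
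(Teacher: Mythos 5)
Your schedule is feasible, but it does not reach the weight bound claimed in the lemma: the final ``combining the $c_1$ coefficients'' step does not work out. On each edge selection machine you schedule only \emph{one} unit vertex job, namely $j_{v_{\ell'}}^{(\ell)}$, so the edge selection machines contribute $\binom{k}{2}c_1$ in total, and together with the $(n_G-k)(k-1)c_1$ from the validation machine the $c_1$-coefficient is $\binom{k}{2}+(n_G-k)(k-1)=(k-1)\bigl(n_G-\tfrac{k}{2}\bigr)$, which falls short of the required $(k-1)n_G$ by exactly $\binom{k}{2}$. Since your validation-machine assignment already schedules every unit vertex job of every \emph{non-clique} vertex, and the unit jobs $j_{v_\ell}^{(\ell')}$ of the clique vertices cannot go on the validation machine (there they are covered by the long job $j_{v_\ell}^{(\ell)}$ that you also schedule), the missing $\binom{k}{2}c_1$ cannot be recovered elsewhere; the bound is simply not attained by your schedule. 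The fix, which is what the paper does, is to schedule \emph{both} cross-color unit vertex jobs on each edge selection machine: for the pair $(\ell,\ell')$ one places the five jobs $j_{v_\ell}^{(\ell,\ell')}$, $j_{v_\ell}^{(\ell')}$, $j_{\{v_\ell,v_{\ell'}\}}$, $j_{v_{\ell'}}^{(\ell)}$, $j_{v_{\ell'}}^{(\ell,\ell')}$, contributing $c_3+n_Gc_2+2c_1$ per machine and hence $k(k-1)c_1$ in total from the unit vertex jobs, which combined with $(n_G-k)(k-1)c_1$ gives the required $(k-1)n_Gc_1$.

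A secondary remark on your interval bookkeeping: the job $j_{v_\ell}^{(\ell')}$ occupies $\bigl((k+2)P-\ell'-1,(k+2)P-\ell'\bigr]$, i.e., precisely the unit slot between $j_{v_\ell}^{(\ell,\ell')}$ and where the edge job is intended to begin (cf.\ the paper's Figure~1, where the edge job starts only after $j_{v_\ell}^{(\ell')}$ ends). With the processing time of $j_e$ taken literally from \cref{const:w1hard}, the edge job's interval starts at $(k+2)P-\ell'-1$ and would clash with that unit slot by one time unit; this is an off-by-one in the stated formula rather than in the intended construction, but it means you cannot simply append $j_{v_\ell}^{(\ell')}$ to your four-job packing without also noting (or correcting) the intended left endpoint of $j_e$. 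Either way, as written your proof establishes a strictly weaker bound than the lemma requires.
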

\begin{proof}
Let $G=(V_1\uplus V_2\uplus\ldots\uplus V_k,E)$ be an instance of \textsc{Multicolored Clique} and consider the corresponding \ProbOne\ instance specified by \cref{const:w1hard}. Assume there is a clique $X$ of size $k$ in $G$. Then we schedule the following jobs.
\begin{itemize}
    \item For color combination $\ell,\ell'$ with $\ell<\ell'$ let $\{v\}=X\cap V_\ell$ and
$\{w\}=X\cap V_{\ell'}$. Since $X$ is a clique in $G$, we know that $e=\{v,w\}\in E$. On edge selection machine $i$ corresponding to color combination $\ell,\ell'$ we schedule the following jobs: $j_e$, $j_v^{(\ell,\ell')}$, $j_w^{(\ell,\ell')}$, $j_v^{(\ell')}$, and $j_w^{(\ell)}$. By construction of the instance, the intervals of the jobs are non-intersecting on the machine $i$, and machines $i$ is in the eligible set of the four jobs. Hence, scheduling these jobs on machine $i$ yields a feasible schedule. Furthermore, it accounts for weight $c_3+n_G c_2+2c_1$ of scheduled jobs per color combination.

Summing over all color combinations, we obtain weight $\binom{k}{2}c_3 + \binom{k}{2}n_G c_2+k(k-1)c_1$. Note that for each $\{v\}=X\cap V_\ell$ we have scheduled all vertex jobs $j_v^{(\ell')}$ with $\ell\neq \ell'$.
\item On the validation machine $i_{\binom{k}{2}+1}$ we schedule the following jobs. Let $\{v\}=X\cap V_\ell$, then we schedule vertex job~$j_v^{(\ell)}$. Note that this job is only in conflict with vertex jobs $j_v^{(\ell')}$ with $\ell\neq\ell'$, which are scheduled on the edge selection machines.
Furthermore, we schedule all jobs $j_w^{(\ell')}$ with $v\neq w\in V_\ell$ and $\ell\neq \ell'$.
By construction, all these jobs can be scheduled on machine $i_{\binom{k}{2}+1}$ without conflicts and all these jobs have machine $i_{\binom{k}{2}+1}$ in their set of eligible machines. 

For all colors, this accounts for weight $(n_G-k)(k-1)c_1+k$ of scheduled jobs.
\end{itemize}

Clearly, we have that the constructed schedule is feasible. Furthermore, it is straightforward to check that the total weight of scheduled jobs in this constructed schedule is $W$.
\end{proof}

Before we show a similar statement for the opposite direction, we make an observation about feasible schedules in \ProbOne\ instances from \cref{const:w1hard}.
%
%
%
%
We show that we can assume that any feasible schedule where the total weight of scheduled jobs is at least $\binom{k}{2}c_3 + \binom{k}{2}n_G c_2 + (k-1)n_G c_1 + k$ schedules exactly one edge job on each edge selection machine.

\begin{observation}\label{obs:edges}
Let $I$ be an instance of \ProbOne\ resulting from applying \cref{const:w1hard} to some $k$-partite graph $G$. Let $\sigma$ be a feasible schedule such that for the total weight $W$ of scheduled jobs we have
\[
W\ge\binom{k}{2}c_3 + \binom{k}{2}n_G c_2 + (k-1)n_G c_1 + k.
\]
Then exactly $\binom{k}{2}$ edge jobs are scheduled, one on each edge selection machine.
\end{observation}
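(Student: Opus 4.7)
My approach is to decompose the observation into three sub-claims: (a) each edge job is eligible only on the edge selection machine for its own color combination; (b) any two edge jobs for the same color combination overlap on that machine, so at most one edge job per edge selection machine can be scheduled; and (c) the weight threshold forces at least $\binom{k}{2}$ edge jobs into the schedule. Together these give exactly $\binom{k}{2}$ edge jobs, one per edge selection machine.

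Claim (a) is immediate from the construction of eligible sets in \cref{const:w1hard}. For claim (b), I fix a color combination $\ell < \ell'$ and any two edge jobs $j_{e_1}, j_{e_2}$ with $e_i = \{v_i,w_i\}$, $v_i \in V_\ell$, $w_i \in V_{\ell'}$, and assume without loss of generality $\pi(v_1)\le \pi(v_2)$. By construction, $j_{e_i}$ occupies the interval $((k+2)\pi(v_i)-\ell'-1,\ (k+2)\pi(w_i)-\ell-1]$, so to show conflict it suffices to verify that $(k+2)\pi(v_2)-\ell'-1 < (k+2)\pi(w_1)-\ell-1$, i.e.\ that $(k+2)(\pi(w_1)-\pi(v_2)) > \ell'-\ell$. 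This holds because $\pi(w_1) > \pi(v_2)$ (as $v_2 \in V_\ell$, $w_1\in V_{\ell'}$, and $\ell<\ell'$ place $v_2$ before $w_1$ in the ordering $<_\pi$) and because $\ell'-\ell < k < k+2$.

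The main step is claim (c), a straightforward but careful weight accounting argument. I upper bound the total weight available from non-edge jobs as follows: at most $(k-1)n_G^2 c_2$ from all color combination jobs (there are at most $(k-1)n_G$ of them, each of weight at most $n_G c_2$), at most $(k-1)n_G c_1$ from the weight-$c_1$ vertex jobs, and at most $n_G$ from the weight-$1$ vertex jobs. Each edge job contributes at most $n_Gc_2+c_3$. I then check that scheduling only $\binom{k}{2}-1$ edge jobs gives total weight strictly less than the required $\binom{k}{2}c_3+\binom{k}{2}n_Gc_2+(k-1)n_Gc_1+k$; after cancellation, this reduces to an inequality of the form $c_3 > (k-1)n_G^2 c_2 - n_Gc_2 + n_G - k$, which is easily satisfied by the choice $c_3=(kn_G+k^2n_G)n_Gc_2+1$. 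Hence any schedule meeting the weight threshold must schedule at least $\binom{k}{2}$ edge jobs, and combined with (a) and (b) this forces exactly one edge job per edge selection machine. The main obstacle is the bookkeeping in claim (c), but the deliberate slack in the definition of $c_3$ makes the final inequality hold with room to spare.
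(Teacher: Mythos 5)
Your proof is correct and follows essentially the same route as the paper: eligibility plus pairwise conflicts give at most one edge job per edge selection machine (hence at most $\binom{k}{2}$ in total), and a weight-counting argument against $c_3$ forces at least $\binom{k}{2}$ edge jobs. Your accounting in claim (c) is in fact slightly more careful than the paper's, which bounds each scheduled edge job's weight by $c_3$ rather than $n_Gc_2+c_3$; both versions go through because of the slack in the choice of $c_3$.
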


\begin{proof}
We first show that no feasible schedule with total weight $W\ge\binom{k}{2}c_3 + \binom{k}{2}n_G c_2 + (k-1)n_G c_1 + k$ of scheduled jobs can schedule more than $\binom{k}{2}$ edge jobs. 
Let $\ell,\ell'$ with $\ell<\ell'$ be a color combination. On the edge selection machine corresponding to color combination $\ell,\ell'$ we have that all edge jobs corresponding to edges that do \emph{not} connect vertices of colors $\ell$ and $\ell'$ are not eligible. Furthermore, all edge jobs corresponding to edges that connect vertices of colors $\ell$ and $\ell'$ are pairwise in conflict. It follows that on each edge selection machine, at most one edge job can be scheduled. Hence, any feasible schedule can schedule at most $\binom{k}{2}$ edge jobs, one on each edge selection machine.

We next show that any feasible schedule with $W\ge\binom{k}{2}c_3 + \binom{k}{2}n_G c_2 + (k-1)n_G c_1 + k$ needs to schedule at least $\binom{k}{2}$ edge jobs. Assume we have a feasible schedule that schedules (strictly) less than $\binom{k}{2}$ edge jobs. Note that each edge job has weight at least $c_3$. Furthermore, there are at most $kn_G+k^2n_G$ jobs that are not edge jobs and those jobs each have weight at most $n_Gc_2$.
Let $\sigma$ be a feasible schedule that does not schedule all edge jobs and let $W$ be the total weight of all jobs scheduled by $\sigma$. Let $W^\star$ denote the sum of weights of all jobs that are not edge jobs. Then we have
\[
W < (\binom{k}{2}-1)c_3+W^\star < \binom{k}{2}c_3.
\]
Hence, the observation follows.
\end{proof}

Now we are ready to show how to construct a clique of size $k$ from a feasible schedule where the total weight of scheduled jobs is at least $\binom{k}{2}c_3 + \binom{k}{2}n_G c_2 + (k-1)n_G c_1 + k$.

\begin{lemma}\label{lem:corr2}
Let $G$ be an instance of \textsc{Multicolored Clique}. Let $I$ be the \ProbOne\ instance computed from $G$ as specified by \cref{const:w1hard}. If there is a feasible schedule $\sigma$ for $I$ such that for the total weight $W$ of scheduled jobs we have
\[
W\ge\binom{k}{2}c_3 + \binom{k}{2}n_G c_2 + (k-1)n_G c_1 + k,
\]
then $G$ contains a clique of size $k$.
\end{lemma}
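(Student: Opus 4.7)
The plan is to leverage the large multiplicative gaps $1 \ll c_1 \ll c_2 \ll c_3$ between the four weight tiers to force any schedule of total weight at least $W$ to be essentially canonical, and then to read a $k$-clique off this canonical schedule. First I would apply \cref{obs:edges} to extract, for each color pair $(\ell, \ell')$ with $\ell < \ell'$, the unique edge $e^{(\ell, \ell')} = \{v^{(\ell, \ell')}, w^{(\ell, \ell')}\}$ with $v^{(\ell, \ell')} \in V_\ell$ and $w^{(\ell, \ell')} \in V_{\ell'}$ whose edge job is scheduled on the corresponding edge selection machine. The end goal is to show that the selection $v^{(\ell, \ell')}$ depends only on $\ell$, and symmetrically $w^{(\ell, \ell')}$ depends only on $\ell'$, so that the resulting $k$ vertices $x_1, \ldots, x_k$ (one per color) form a clique, because on each edge selection machine the scheduled edge is precisely $\{x_\ell, x_{\ell'}\}$.

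Next I would perform a tier-by-tier weight analysis to constrain the schedule. The $c_3$ contribution is already pinned to $\binom{k}{2}c_3$ by \cref{obs:edges}. For the $c_2$ tier I would argue that on every edge selection machine $(\ell, \ell')$ the interval geometry admits at most one color combination job from $V_\ell$ and at most one from $V_{\ell'}$ alongside the scheduled edge job, and that feasibility forces the $V_\ell$-side job to have index at most $\pi(v^{(\ell,\ell')})$ and the $V_{\ell'}$-side job to have index at least $\pi(w^{(\ell,\ell')})$. A direct calculation then bounds the $c_2$-coefficient contributed by the edge job plus these two color combination jobs by $c_2 n_G$ per machine, with equality only when the color combination jobs are exactly $j_{v^{(\ell,\ell')}}^{(\ell,\ell')}$ and $j_{w^{(\ell,\ell')}}^{(\ell,\ell')}$. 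Because a single $c_2$ strictly exceeds the total weight obtainable from all weight-$c_1$ and weight-$1$ jobs, any deficit at the $c_2$ tier cannot be recovered below; hence the weight bound forces this $c_2$-equality on every edge selection machine. An analogous counting argument, using that only $n_G(k-1)$ weight-$c_1$ vertex jobs exist in total, forces every weight-$c_1$ vertex job to be scheduled somewhere.

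Third I would pin down the placement of each weight-$c_1$ vertex job. On an edge selection machine whose slots around the edge job are saturated by the two matching color combination jobs, the only remaining unit-length slots accommodate precisely $j_{v^{(\ell,\ell')}}^{(\ell')}$ and $j_{w^{(\ell,\ell')}}^{(\ell)}$; in particular the weight-$c_1$ vertex jobs scheduled there are exactly these two. Consequently a weight-$c_1$ job $j_u^{(m)}$ lands on its unique eligible edge selection machine precisely when $u$ is a selected endpoint on that machine. If $s_u$ denotes the number of edge selection machines on which $u$ is selected, then exactly $k-1-s_u$ of $u$'s weight-$c_1$ jobs must be placed on the validation machine. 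Since the long weight-$1$ job $j_u^{(\ell_u)}$ overlaps every short job of the same vertex $u$ on the validation machine, $j_u^{(\ell_u)}$ can contribute $1$ to the weight-$1$ tier only when $s_u = k-1$, i.e., when $u$ is consistently selected on every edge selection machine touching color $\ell_u$. At most one vertex per color can be consistently selected, and the required weight-$1$ contribution of at least $k$ forces exactly one consistently selected vertex $x_\ell$ in each color class. By construction the selected edge on machine $(\ell, \ell')$ is then $\{x_\ell, x_{\ell'}\}$, so all $\binom{k}{2}$ pairs among $\{x_1, \ldots, x_k\}$ lie in $E$ and this set is the desired $k$-clique.

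The main obstacle I expect is the tier-gap bookkeeping: the constants $c_1, c_2, c_3$ have to strictly dominate the cumulative slack at all lower tiers in order to translate the weight inequality into the structural equalities described, and checking this requires careful accounting of the maximum contribution each tier can offer. A secondary but nontrivial technical point is the interval-packing analysis on every edge selection machine, which must rule out alternative placements of color combination jobs and weight-$c_1$ vertex jobs so that the five-job tiling I rely on is indeed forced.
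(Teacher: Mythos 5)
Your proposal follows essentially the same route as the paper's proof: \cref{obs:edges} pins down one edge job per edge selection machine, the weight gaps $c_1 \ll c_2 \ll c_3$ force the two matching color combination jobs (and hence a per-machine $c_2$-coefficient of exactly $n_G$) and then force all $(k-1)n_G$ weight-$c_1$ vertex jobs to be scheduled, and finally the conflict on the validation machine between a vertex's long weight-one job and its short weight-$c_1$ jobs identifies one consistently selected vertex per color. The paper phrases this last step as a contradiction (a non-adjacent pair among the selected vertices would leave two short vertex jobs with no feasible machine), whereas you argue it directly via the counts $s_u$, but the underlying argument is the same.
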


\begin{proof}
Let $G=(V_1\uplus V_2\uplus\ldots\uplus V_k,E)$ be an instance of \textsc{Multicolored Clique} and consider the corresponding \ProbOne\ instance specified by \cref{const:w1hard}. Assume we have a feasible schedule $\sigma$ such that the for the total weight $W$ of scheduled jobs we have
$W\ge\binom{k}{2}c_3 + \binom{k}{2}n_G c_2 + (k-1)n_G c_1 + k$. We construct a clique of size $k$ in $G$ as follows.

By \cref{obs:edges} we know that $\sigma$ schedules one edge job on each edge selection machine. 
We can also observe that on the validation machine, only vertex jobs are eligible and can be scheduled. Note that the sum of weights of all vertex jobs is $(k-1)n_Gc_1+n_G$, which is strictly smaller than $c_2$.

 Assume that edge job $j_e$ is scheduled on the edge selection machine $i$ corresponding to color combination $\ell,\ell'$ with $\ell<\ell'$. Then by construction, $e=\{v,w\}\in E$ with $v\in V_\ell$ and $w\in V_{\ell'}$. Now by construction of the instance, two color combination jobs can be scheduled on machine $i$, one for a vertex of color $\ell$ and one for a vertex of color $\ell'$.
 In order to obtain a weight of scheduled jobs of at least $c_3+n_G c_2$ it is necessary that jobs $j_v^{(\ell,\ell')}$ and $j_w^{(\ell,\ell')}$ are scheduled (note that the weights of $j_e$, $j_v^{(\ell,\ell')}$, and $j_w^{(\ell,\ell')}$ sum up to exactly $c_3+n_G c_2$). Any other selection of color combination jobs to schedule either results in a weight that is lower by at least $c_2$ or in an infeasible schedule. 
 Now, by construction, the only further jobs that can be scheduled are $j_v^{(\ell')}$ and $j_w^{(\ell)}$. It follows that the maximum weight achievable on any edge selection machine is $c_3+n_g c_2+2c_1$. Since $\binom{k}{2}2c_1<c_2$, it follows that for each edge selection machine corresponding to color combination $\ell,\ell'$ with $\ell<\ell'$ we have the following: one edge job $j_e$ for $e=\{v,w\}$ with $v\in V_\ell$ and $w\in V_{\ell'}$ is scheduled and the two color combination jobs $j_v^{(\ell,\ell')}$ and $j_w^{(\ell,\ell')}$ are scheduled.
 
 We can conclude that the jobs scheduled on all edge selection machines have a total weight of at least $\binom{k}{2}c_3 + \binom{k}{2}n_G c_2$. Hence, there are additional jobs scheduled that have a total weight of $(k-1)n_G c_1 + k$ on the validation machine.
 
 Since no additional edge jobs or color combination jobs can be scheduled, we have that all $(k-1)n_G$ vertex jobs $j_v^{(\ell')}$ with $v\in V_\ell$ and $\ell\neq \ell'$ (having weight $c_1>n_G$) are scheduled. Furthermore, at least $k$ vertex jobs $j_v^{(\ell)}$ with $v\in V_\ell$ (having weight one) are scheduled.
 
 Let $X$ be the set of vertices in $G$ such that if $v\in X$ and $v\in V_\ell$, the job $j_v^{(\ell)}$ is scheduled. We claim that $X$ is a clique of size at least $k$ in $G$.
 
 By construction we have that $|X|\ge k$, assume for contradiction that $X$ is not a clique in $G$. Then there are two vertices $v,w\in X$ such that $e=\{v,w\}\notin E$. Let $v\in V_\ell$ and $w\in V_{\ell'}$. Then, in particular, vertex jobs $j_v^{(\ell')}$ and $j_w^{(\ell)}$ cannot be scheduled on the validation machine, since they are in conflict with vertex jobs $j_v^{(\ell)}$ and $j_w^{(\ell')}$, respectively. However, as observed above, vertex jobs $j_v^{(\ell')}$ and $j_w^{(\ell)}$ can only be scheduled on the edge selection machine corresponding to color combination $\ell,\ell'$ if there is edge job $j_e$ with $e=\{v,w\}$ scheduled on that machine, a contradiction to the assumption that $e=\{v,w\}\notin E$.
\end{proof}

Finally, we have all ingredients to prove \cref{thm:w1hard}.

\begin{proof}[Proof of \cref{thm:w1hard}]
To prove \cref{thm:w1hard}, we show that \cref{const:w1hard} is parameterized polynomial-time reduction from \textsc{Multicolored Clique} parameterized by the number of colors to \ProbOne\ parameterized by the number $m$ of machines. First, it is easy to observe that given an instance of \textsc{Multicolored Clique}, the \ProbOne\ instance specified by \cref{const:w1hard} can be computed in polynomial time. Furthermore, if $k$ is the number of colors in the \textsc{Multicolored Clique} instance, then the number of machines in the constructed \ProbOne\ instance is $m=\binom{k}{2}+1$. Lastly, observe that all weights in the constructed \ProbOne\ instance are in $n_G^{O(1)}$ (where $n_G$ is the number of vertices in the \textsc{Multicolored Clique} instance).

The correctness of the reduction follows from \cref{lem:corr1,lem:corr2}. Since \textsc{Multicolored Clique} parameterized by the number of colors is W[1]-hard~\cite{fellows2009multipleinterval}, we have that \cref{thm:w1hard} follows.
\end{proof}

\section{NP-Hardness of \ProbThree}\label{sec:2}
In this section we prove \cref{thm:nph}. The containment of \ProbThree\ in NP is easy to see, hence we focus on proving NP-hardness.
To this end, we present a polynomial-time many-one reduction from \textsc{Exact (3,4)-SAT} to \ProbThree. In \textsc{Exact (3,4)-SAT} we are given a Boolean formula $\phi$ in conjunctive normal form where every clause has exactly three literals and every variable appears in exactly four clauses, and are asked whether $\phi$ has a satisfying assignment. \textsc{Exact (3,4)-SAT} is known to be NP-hard~\cite{tovey1984simplified}.

Given such a formula $\phi$, we construct an instance $I$ of \ProbThree\ as follows.

\begin{construction}\label{const:nph}
Let $\phi$ be a Boolean formula in conjunctive normal form where every clause has exactly three literals and every variable appears in exactly four clauses. Let $\alpha$ be the number of variables in $\phi$ and let~$\beta$ be the number of clauses in $\phi$. We construct an instance $I$ of \ProbThree\ as follows.

We first describe the jobs, then we define an ordering of the jobs and use it to specify their deadlines. Lastly, we describe the processing times of the jobs on the different machines. We create the following jobs.
\begin{itemize}
    \item For every variable $x$, we create two \emph{variable jobs}: $x^T$ and $x^F$.
    \item For every clause $c$, we create three \emph{clause jobs}: $c_{1}$, $c_{2}$, and $c_{3}$.
    \item We create $2\alpha+2\beta$ \emph{dummy jobs}. 
\end{itemize}

We next define an ordering $\pi$ of the jobs, which we will use to define the deadlines of the jobs.
To this end, we partition the jobs into the following sets.
\begin{itemize}
    \item Let $T=\{x^T\mid x\text{ is a variable in }\phi\}$.
    \item Let $F=\{x^F\mid x\text{ is a variable in }\phi\}$.
    \item Let $P=\{c_{\ell}\mid \text{the }\ell\text{th literal of clause }c\text{ of }\phi\text{ is non-negated}\}$.
    \item Let $N=\{c_{\ell}\mid \text{the }\ell\text{th literal of clause }c\text{ of }\phi\text{ is negated}\}$.
    \item Let $D$ be the set of dummy jobs.
\end{itemize}
Now we define $\pi$ as a total ordering of the jobs such that
\[
D\prec N\prec F\prec P\prec T,
\]
and the jobs within the sets are ordered in an arbitrary but fixed way.
Let $\pi(j)$ denote the ordinal position of job $j$ in $\pi$. For each job $j$, we set 
\[
d_j=\pi(j).
\]

We next describe the machines, more specifically, the processing times of all jobs on each of the machines.
We first introduce $\alpha$ \emph{variable selection machines}, one for each variable in $\phi$. Let $x$ be a variable in $\phi$, then we introduce a machine where the processing time of variable job $x^T$ is $\pi(x^T)-2\alpha-2\beta$ and the processing time of variable job $x^F$ is $\pi(x^F)-2\alpha-2\beta$. We set the processing times of all other jobs to their respective deadlines.
We give an illustration of the variable selection machines in \cref{fig:variable_selection}.

\begin{figure}[!t]
    \centering
    \begin{tikzpicture}[scale=0.75,yscale=.75]
        \input{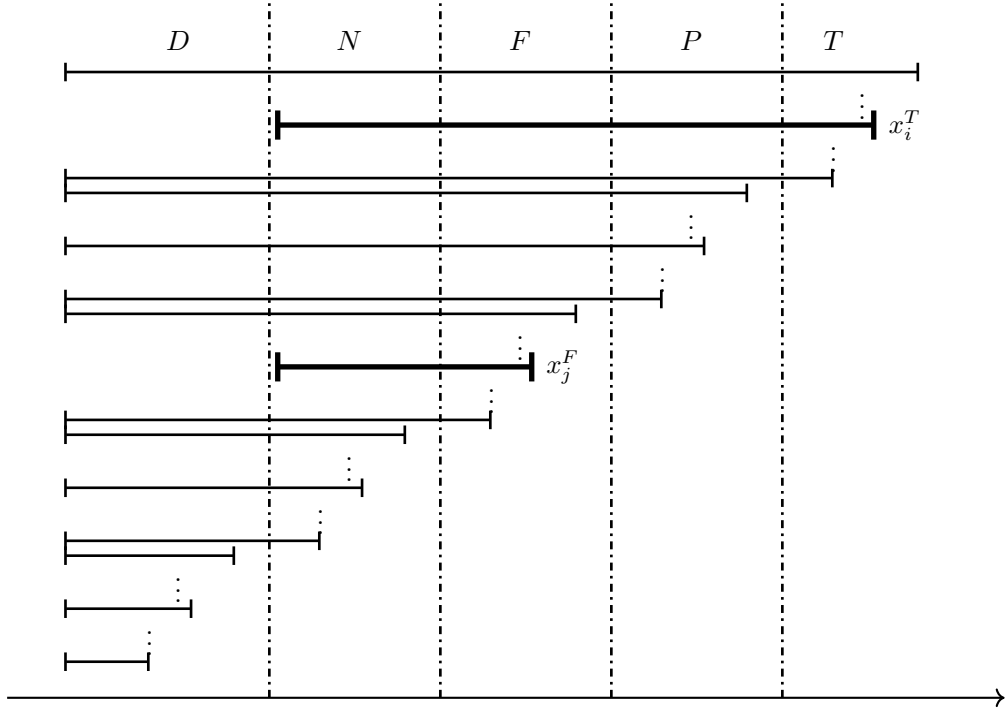}
    \end{tikzpicture}
    \caption{Illustration of the job intervals on the variable selection machine for variable $x$. On this machine only one of jobs $x^T$ and $x^F$ (bold) can be scheduled alongside with one dummy job.}
    \label{fig:variable_selection}
\end{figure}

Next, we introduce $2\beta$ \emph{clause selection machines}, two for each clause in~$\phi$. Let $c$ be a clause in~$\phi$, then we introduce two machines where  the processing times of ${c_1}$, ${c_2}$, and ${c_3}$ are $\pi(c_1)-2\alpha-2\beta$, $\pi(c_2)-2\alpha-2\beta$, and $\pi(c_3)-2\alpha-2\beta$, respectively. We set the processing times of all other jobs to their respective deadlines.
We give an illustration of the clause selection machines in \cref{fig:clause_selection}.

\begin{figure}[!t]
    \centering
    \begin{tikzpicture}[scale=0.75,yscale=.75]
        \input{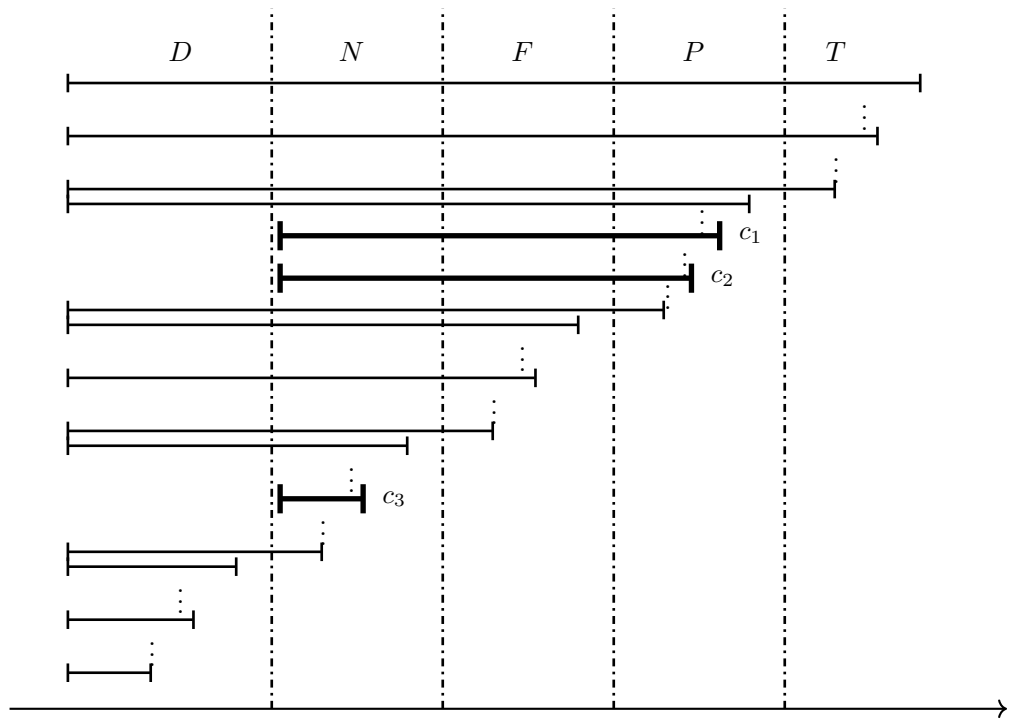}
    \end{tikzpicture}
    \caption{Illustration of the job intervals on the clause selection machine for clause $c$. On this machine only one of the jobs $c_1$, $c_2$ and $c_3$ (bold) can be scheduled alongside with one dummy job.}
    \label{fig:clause_selection}
\end{figure}

Furthermore, we have $\alpha$ \emph{validation machines}, one for each variable in $\phi$. Let $x$ be a variable in $\phi$, then we introduce a machine where 
\begin{itemize}
    \item the processing time of ${x^T}$ is $\pi(x^T)-\pi(x^F)+1$, 
    \item the processing time of ${x^F}$ is $\pi(x^F)-2\alpha-2\beta$,
    \item if $x$ appears in the $\ell$th literal of clause $c$, then the processing time of ${c_\ell}$ is one, and
    \item processing times of all other jobs are set to their respective deadlines.
\end{itemize}
We give an illustration of the validation machines in \cref{fig:validation_machine}.

\begin{figure}[!t]
    \centering
    \begin{tikzpicture}[scale=.75,yscale=.75]
        \input{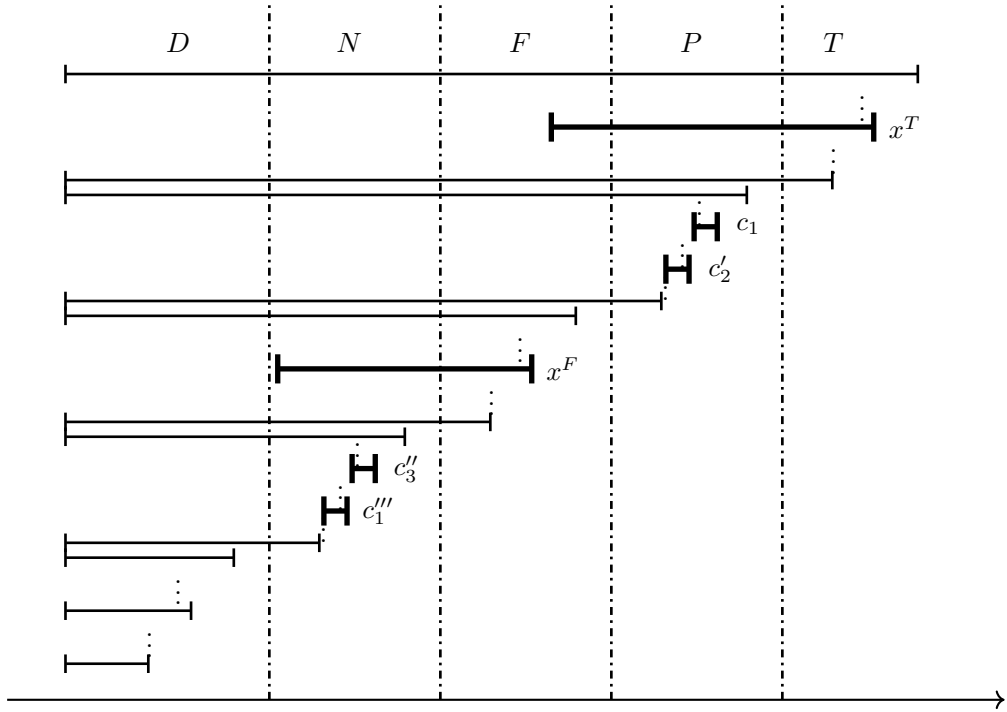}
    \end{tikzpicture}
    \caption{Illustration of the job intervals on the validation machine for variable $x$ for the case that $x$ appears in clauses $c$ and $c'$ non-negated in positions one and two, respectively, and that $x$ appears in clauses $c''$ and $c'''$ negated in positions three and one, respectively. Processing time intervals of jobs that do not conflict with dummy jobs on this machine are depicted in bold.}
    \label{fig:validation_machine}
\end{figure}

\end{construction}

This finishes our construction of the \ProbThree\ instance. 
%
We first show that given a satisfying assignment for $\phi$, we can create a feasible schedule for the constructed instance such all jobs are scheduled.

\begin{lemma}\label{lem:nphcorr1}
Let $\phi$ be an instance of \textsc{Exact (3,4)-SAT}. Let $I$ be the \ProbThree\ instance computed from $\phi$ as specified by \cref{const:nph}. If $\phi$ is satisfiable, then there is a feasible schedule $\sigma$ for $I$ such that all jobs are scheduled.
\end{lemma}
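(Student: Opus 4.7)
The plan is to construct an explicit schedule that places all $4\alpha+5\beta$ jobs. First, I assign the $2\alpha+2\beta$ dummy jobs bijectively to the $2\alpha+2\beta$ machines (each machine receives one). Second, for each variable $x$ with $A(x)=T$, I schedule $x^T$ on the variable selection machine for $x$ and $x^F$ on the validation machine for $x$; if $A(x)=F$, I swap the two roles. Third, for each clause $c$, I pick any literal of $c$ made true by $A$, let $\ell^\ast$ be its position in $c$ and $x^\ast$ its variable, and schedule $c_{\ell^\ast}$ on the validation machine for $x^\ast$ while placing the other two clause jobs $c_\ell$ (with $\ell\neq\ell^\ast$) one on each of the two clause selection machines of $c$.

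To verify feasibility, the ordering $D\prec N\prec F\prec P\prec T$ from \cref{const:nph} does essentially all the work. Every dummy job has processing time equal to its deadline on every machine, so its interval lies in $(0,2\alpha+2\beta]$, whereas every other scheduled job has a processing time interval starting at $2\alpha+2\beta$ or later on the machine where I placed it; hence dummies never collide with the non-dummies. On each variable selection machine and each clause selection machine only one non-dummy is scheduled, which finishes those machines immediately. On the validation machine for $x$, the interval of $x^F$ is $(2\alpha+2\beta,\pi(x^F)]$ and the interval of $x^T$ is $(\pi(x^F)-1,\pi(x^T)]$. Because of the ordering, any unit-length clause job $c_\ell$ with $\pi(c_\ell)\in N$ ends at $\pi(c_\ell)\le\pi(x^F)-1$, so it conflicts with $x^F$ but not with $x^T$; symmetrically, any unit-length $c_\ell$ with $\pi(c_\ell)\in P$ conflicts with $x^T$ but not with $x^F$. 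By step three of my schedule, the clause jobs placed on the validation machine for $x$ correspond to chosen satisfied literals involving $x$, so their signs match the scheduled variable job: if $A(x)=T$ they lie in $P$ and coexist with the scheduled $x^F$, while if $A(x)=F$ they lie in $N$ and coexist with the scheduled $x^T$. Finally, because all such clause jobs have unit length and pairwise distinct integer deadlines, their intervals are pairwise disjoint.

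The main step to verify carefully is this exact compatibility on the validation machines, which relies on the tailored processing times $\pi(x^T)-\pi(x^F)+1$ and $\pi(x^F)-2\alpha-2\beta$ of \cref{const:nph}: these are tuned precisely so that $x^T$'s interval covers the $P$-band while just missing the $N$-band, and $x^F$'s interval covers the $N$-band together with the portion of the $F$-band up to $\pi(x^F)$ (which also forbids co-scheduling $x^T$ and $x^F$ on the validation machine). Once this matching between the ordering and the processing times is laid out explicitly, the count is immediate: $2\alpha$ variable jobs, $3\beta$ clause jobs, and $2\alpha+2\beta$ dummy jobs, totalling $4\alpha+5\beta$, i.e., every job of $I$.
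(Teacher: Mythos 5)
Your proposal is correct and constructs exactly the same schedule as the paper's proof (dummies bijectively to machines, variable jobs split between selection and validation machines according to the assignment, the satisfying literal's clause job to the validation machine and the other two to the clause selection machines); your feasibility check on the validation machines just makes the interval arithmetic more explicit where the paper argues by contradiction. The only nit is the notation ``$\pi(c_\ell)\in N$'', which should read $c_\ell\in N$.
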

\begin{proof}
Let $\phi$ be an instance of \textsc{Exact (3,4)-SAT} and consider the corresponding \ProbThree\ instance specified by \cref{const:nph}.
Assume there is a satisfying assignment for $\phi$. Then we schedule the jobs as follows.

We first describe on which machine we schedule each variable job. Let $x$ be a variable in $\phi$.
If~$x$ is set to true in the satisfying assignment, we schedule variable job $x^T$ on the variable selection machine corresponding to $x$ and we schedule variable job $x^F$ on the validation machine corresponding to~$x$. Otherwise, we schedule variable job $x^F$ on the variable selection machine corresponding to $x$ and we schedule variable job $x^T$ on the validation machine corresponding to $x$.

Next, we describe on which machine to schedule each clause job. Let $c$ be a clause in $\phi$. Let clause~$c$ be satisfied by its $\ell$th literal (if multiple literals satisfy the clause, pick one of them arbitrarily). Let $x$ be the variable appearing in the $\ell$th literal of $c$. We schedule clause job $c_\ell$ on the validation machine corresponding to $x$. We schedule clause jobs $c_{\ell'}$ with $\ell'\in\{1,2,3\}\setminus\{\ell\}$ on the two clause selection machines corresponding to $c$, respectively.

Lastly, notice that the number of dummy jobs equals the number of machines. For each dummy job we arbitrarily choose a distinct machine and schedule it on this machine.

In the constructed schedule, we clearly schedule each job. We next show that the schedule is feasible.

Notice that on each variable selection machine and each clause selection machine we schedule exactly two jobs, one dummy job and one variable job of the variable corresponding to the variable selection machine or, respectively, one clause job of the clause corresponding to the clause selection machine. Since the processing time intervals of the variable jobs of variables corresponding to the variable selection machines start at $2\alpha+2\beta$ and the deadline of each dummy job is at most $2\alpha+2\beta$, the schedules for the variable selection machines are feasible. Analogously, the schedules for the clause selection machines are feasible. 

It remains to show that the schedules for the validation machines are feasible. Notice that by construction, the variable jobs and clause jobs that are potentially scheduled on a validation machine cannot conflict with any dummy job. Furthermore, the variable jobs that are potentially scheduled on a validation machine cannot conflict with each other. We have the same for the clause jobs.

Hence, the only way to obtain an infeasible schedule is if a variable job and a clause job are in conflict. Assume the variable $x^T$ is scheduled (the case of variable job $x^F$ is analogous) and clause job $c_\ell$ is scheduled and the two jobs are in conflict. Note that this implies that we are dealing with the validation machine for variable $x$. By construction of the schedule, this means that variable $x$ is set to false in the satisfying assignment. However, the jobs $c_\ell$ and $x^T$ are in conflict on the validation machine for $x$ (and the job of $c_\ell$ is not in conflict with the dummy jobs) if $x$ appears non-negated in the $\ell$th literal of clause $c$. Furthermore, by construction of the schedule, we have that clause $c$ is satisfied by its $\ell$th literal. This is a contradiction to $x$ being set to false in the satisfying assignment.
\end{proof}

Now we show how to construct a satisfying assignment from a feasible schedule where all jobs are scheduled.

\begin{lemma}\label{lem:nphcorr2}
Let $\phi$ be an instance of \textsc{Exact (3,4)-SAT}. Let $I$ be the \ProbThree\ instance computed from $\phi$ as specified by \cref{const:nph}. If there is a feasible schedule $\sigma$ for $I$ such that all jobs are scheduled, then $\phi$ is satisfiable.
\end{lemma}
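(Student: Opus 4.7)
The plan is to use the tightness of the instance---every one of the $4\alpha+5\beta$ jobs must be scheduled---to pin down the global structure of any feasible schedule, and then to read a satisfying assignment off this structure.

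First, I would establish a dummy pigeonhole. On every machine, each dummy job has processing time equal to its deadline, so its interval is $(0,\pi(j)]$; consequently any two jobs whose processing times on the same machine equal their deadlines are in conflict there. Since there are exactly $2\alpha+2\beta$ dummy jobs and $2\alpha+2\beta$ machines, each machine must receive exactly one dummy job, and the only non-dummy jobs that can coexist with this dummy on a given machine are those whose processing time has been explicitly shortened for that machine.

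Second, I would inspect each type of machine and combine this with a counting argument. On the variable selection machine for $x$ the only shortened jobs are $x^T$ and $x^F$, and their intervals share all points just after $2\alpha+2\beta$, so at most one of them can be scheduled there; analogously, each clause selection machine for $c$ admits at most one of $c_1,c_2,c_3$ alongside its dummy. On the validation machine for $x$ the shortened jobs are exactly $x^T$, $x^F$, and those $c_\ell$ with $x$ in $c$'s $\ell$th literal, and $x^T$, $x^F$ still mutually conflict on this machine. Since there are $2\alpha$ variable jobs overall with at most $\alpha$ of them fitting on variable selection machines and at most $\alpha$ on validation machines, both bounds are tight: each variable selection machine and each validation machine carries exactly one variable job from the corresponding pair $\{x^T,x^F\}$. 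Similarly, each clause $c$ has $3$ clause jobs but only $2$ clause selection machines, so at least one $c_\ell$ must be scheduled on the validation machine of a variable appearing in $c$'s $\ell$th literal.

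Third, I would run the crucial conflict analysis on the validation machine. Suppose $c_\ell$ sits on the validation machine for $x$, so $x$ appears in $c$'s $\ell$th literal. Using the block ordering $D\prec N\prec F\prec P\prec T$ together with the fact that $c_\ell$ has processing time $1$ on this machine: if $x$ is non-negated in this literal (so $c_\ell\in P$), the interval of $c_\ell$ lies strictly between $\pi(x^F)$ and $\pi(x^T)$, hence conflicts with $x^T$ but not with $x^F$, forcing the unique variable job on this validation machine to be $x^F$; symmetrically, if $x$ is negated ($c_\ell\in N$), the interval of $c_\ell$ lies in the $N$-block, conflicting with $x^F$ but not with $x^T$, forcing the variable job to be $x^T$. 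I then define the assignment that sets $x$ to true iff $x^T$ is scheduled on the variable selection machine for $x$---equivalently, iff $x^F$ is scheduled on the validation machine for $x$. By the case analysis, any clause $c$ that contributes a job $c_\ell$ to some validation machine has its $\ell$th literal true under this assignment, and by the previous paragraph every clause does contribute such a job. Hence $\phi$ is satisfied.

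The main obstacle will be the per-machine conflict analysis, which hinges on the half-open interval convention combined with the strict orderings between the blocks $D,N,F,P,T$; once the per-machine capacities are pinned down, the remainder is essentially a clean pigeonhole argument plus the two-case forcing described above.
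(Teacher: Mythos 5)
Your proposal is correct and follows essentially the same route as the paper's proof: the dummy-job pigeonhole forcing one dummy per machine, the observation that only the explicitly shortened jobs can share a machine with a dummy, the pigeonhole placing one of the three clause jobs of each clause on a validation machine, and the conflict analysis on the validation machine forcing the corresponding literal to be satisfied. The only (cosmetic) differences are that you argue the clause-satisfaction step directly rather than by contradiction, and you pin down the placement of the $2\alpha$ variable jobs by a global counting argument where the paper argues locally that $x^F$ must land on the validation machine once $x^T$ occupies the variable selection machine.
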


\begin{proof}
Let $\phi$ be an instance of \textsc{Exact (3,4)-SAT} and consider the corresponding \ProbThree\ instance specified by \cref{const:nph}.
Assume we have a feasible schedule for the constructed instance such that all jobs are scheduled. We construct a satisfying assignment for $\phi$ as follows. 

First, observe that by construction, the at most one dummy job can be scheduled on each machine. Since the number of dummy jobs equals the number of machines, we have that on each machine exactly one dummy job is scheduled. This means that on each machine no non-dummy jobs that conflict with a dummy job (that is, jobs with processing time equal to their deadline) can be scheduled.

We can further observe that on the variable selection machine of variable $x$, apart from a dummy job, only the variable job $x^T$ or the variable job $x^F$ can be scheduled. Since the two jobs conflict, they cannot both be scheduled. We assume w.l.o.g.\ that exactly one of the two jobs is scheduled. If the variable job $x^T$ is scheduled, we set variable $x$ to true, otherwise we set variable $x$ to false. In the remainder, we show that this yields a satisfying assignment for $\phi$.

Assume for contradiction that $\phi$ is not satisfied by the constructed assignment. Then there is a clause $c$ in $\phi$ such that none of its literals are satisfied. Consider the three clause jobs $c_1$, $c_2$, and $c_3$ associated with the three literals in clause $c$. Each of these three jobs can only be scheduled (without creating a conflict with a dummy job) on the clause selection machines corresponding to $c$, and the validation machine corresponding to the variable appearing in the respective literal of the clause $c$. Since we only have two clause selection machines, at least one  of the clause jobs $c_1$, $c_2$, and $c_3$ has to be scheduled on a validation machine. Assume $c_1$ is scheduled on a validation machine (the case of $c_2$ and $c_3$ is symmetric). Let $x$ be the variable appearing in the first literal of $c$. Assume the variable job $x^T$ is scheduled on the variable selection machine corresponding to $x$ (the case where the variable job $x^F$ is scheduled is symmetric). Then the variable job $x^F$ has to be scheduled on the validation machine corresponding to $x$, since on all other machines it is in conflict with all dummy jobs. However, by construction of the validation machines, the clause job $c_1$ and the variable job $x^F$ can only both be scheduled on the validation machine corresponding to $x$ if setting $x$ to true satisfies the first literal of $c$, a contradiction to the assumption that $c$ is not satisfied.
\end{proof}

Finally, we have all ingredients to prove \cref{thm:nph}.

\begin{proof}[Proof of \cref{thm:nph}]
To prove \cref{thm:nph}, we show that \cref{const:nph} is polynomial-time many-one reduction from \textsc{Exact (3,4)-SAT} to \ProbThree\. First, it is easy to observe that given an instance of \textsc{Exact (3,4)-SAT}, the \ProbThree\ instance specified by \cref{const:nph} can be computed in polynomial time. 
The correctness of the reduction follows from \cref{lem:nphcorr1,lem:nphcorr2}. Since \textsc{Exact (3,4)-SAT} is NP-hard~\cite{tovey1984simplified}, we have that \cref{thm:nph} follows.
\end{proof}

\section{Conclusion}\label{sec:3}

We proved that \ProbOne\ and its generalization \ProbTwo\ are W[1]-hard when parameterized by the number $m$ of machines, and that \ProbThree\ is NP-complete, answering two open questions by Mnich and van Bevern~\cite{mnich2018parameterized} and Sung and Vlach~\cite{sung2005maximizing}, respectively.
With this, we contribute to the understanding of the (parameterized) computational complexity of basic and natural interval scheduling problems. 

Our results leave two main open questions. Our NP-hardness proof for \ProbThree\ does not imply NP-hardness of \ProbFour, which leaves the following question.

\begin{openquestion}
What is the computational complexity of \ProbFour?
\end{openquestion}
Furthermore, the reduction used in our NP-hardness proof for \ProbThree\ uses an unbounded number of machines, and hence does not imply W[1]-hardness of \ProbThree\ when parameterized by the number $m$ of machines. Hence, we have the following question. 

\begin{openquestion}
What is the parameterized complexity of \ProbThree\ when parameterized by the number $m$ of machines?
\end{openquestion}

Lastly, we want to point out that the parameterized reduction we use to prove that \ProbOne\ and \ProbTwo\ are W[1]-hard when parameterized by the number $m$ of machines roughly squares the parameter. More precisely, we reduce from \textsc{Multicolored Clique} parameterized by the number $k$ of colors and for the number $m$ of machines in the produced \ProbOne\ / \ProbTwo\ instances we have $m\in O(k^2)$. This implies that assuming the Exponential Time Hypothesis (ETH)~\cite{IP01,IPZ01}, that there are no $f(m)(n+m)^{o(\sqrt{m})}$ algorithms for \ProbOne\ and \ProbTwo\ for any function $f$~\cite{lokshtanov2013lower}. However, the best known algorithms have running time $(n+m)^{O(m)}$~\cite{arkin1987scheduling,sung2005maximizing}. Hence, there is still a gap between the upper and lower bound.

\bibliography{literature}



\end{document}